\documentclass[11pt]{article}
\usepackage{epsf}
\usepackage{amsmath}
\usepackage{epsfig}
\usepackage{times}
\usepackage{amssymb}
\usepackage{amsthm}
\usepackage{setspace}
\usepackage{cite}

\usepackage{algorithmic}  
\usepackage{algorithm}

\usepackage{shadow}
\usepackage{fancybox}
\usepackage{fancyhdr}

\def\w{{\bf w}}

\def\y{{\bf y}}

\def\x{{\bf x}}

\def\x{{\mathbf x}}

\def\w{{\bf w}}

\def\x{{\bf x}}
\def\y{{\bf y}}
\def\z{{\bf z}}
\def\q{{\bf q}}
\def\a{{\bf a}}

\def\h{{\bf h}}

\def\be{\begin{equation}}
\def\ee{\end{equation}}
\def\ba{\left[\begin{array}}
\def\ea{\end{array}\right]}

\def\w{{\bf w}}

\def\x{{\bf x}}
\def\y{{\bf y}}
\def\z{{\bf z}}
\def\q{{\bf q}}
\def\a{{\bf a}}

\def\1{{\bf 1}}

\def\W{{\bf W}}
\def\G{{\bf G}}

\def\g{{\bf g}}
\def\0{{\bf 0}}

\def\w{{\bf w}}

\def\y{{\bf y}}

\def\x{{\bf x}}

\def\x{{\mathbf x}}

\def\w{{\bf w}}

\def\x{{\bf x}}
\def\y{{\bf y}}
\def\z{{\bf z}}
\def\q{{\bf q}}
\def\a{{\bf a}}

\def\h{{\bf h}}

\def\be{\begin{equation}}
\def\ee{\end{equation}}
\def\ba{\left[\begin{array}}
\def\ea{\end{array}\right]}

\def\w{{\bf w}}

\def\x{{\bf x}}
\def\y{{\bf y}}
\def\z{{\bf z}}
\def\q{{\bf q}}
\def\a{{\bf a}}

\def\W{{\bf W}}
\def\G{{\bf G}}

\def\g{{\bf g}}
\def\0{{\bf 0}}

\def\1{{\bf 1}}

\def\W{{\bf W}}
\def\G{{\bf G}}

\def\X{{\bf X}}

\def\g{{\bf g}}
\def\0{{\bf 0}}


\def\htheta{\hat{\theta}}

\def\H{{\bf H}}

\def\gammainc{\gamma_{inc}}


\def\cweak{c_{w}}

\def\betaweak{\beta_{w}}
\def\thetaweak{\theta_{w}}
\def\hthetaweak{\hat{\theta}_{w}}





\def\htheta{\hat{\theta}}


\def\cweak{c_{w}}




\newtheorem{theorem}{Theorem}

\newtheorem{lemma}{Lemma}

\setlength{\oddsidemargin}{0in} \setlength{\evensidemargin}{0in}
\setlength{\textwidth}{6.5in} 
\setlength{\textheight}{9in} 
\setlength{\topmargin}{-0.25in}

\begin{document}

\begin{singlespace}

\title {Optimality of $\ell_2/\ell_1$-optimization block-length dependent thresholds
}
\author{
\textsc{Mihailo Stojnic}
\\
\\
{School of Industrial Engineering}\\
{Purdue University, West Lafayette, IN 47907} \\
{e-mail: {\tt mstojnic@purdue.edu}} }
\date{}
\maketitle

\centerline{{\bf Abstract}} \vspace*{0.1in}

The recent work of \cite{CRT,DonohoPol} rigorously proved (in a large dimensional and
statistical context) that if the number of equations (measurements
in the compressed sensing terminology) in the system is proportional
to the length of the unknown vector then there is a sparsity (number
of non-zero elements of the unknown vector) also proportional to the
length of the unknown vector such that $\ell_1$-optimization algorithm
succeeds in solving the system. In more recent papers \cite{StojnicCSetamBlock09,StojnicICASSP09block,StojnicJSTSP09} we considered under-determined systems with the so-called \textbf{block}-sparse solutions. In a large dimensional and
statistical context in \cite{StojnicCSetamBlock09} we determined
lower bounds on the values of allowable sparsity for any given
number (proportional to the length of the unknown vector) of
equations such that an $\ell_2/\ell_1$-optimization algorithm succeeds in solving the system. These lower bounds happened to be in a solid
numerical agreement with what one can observe through numerical experiments. Here we derive the  corresponding upper bounds. Moreover, the upper bounds that we obtain in this paper match the lower bounds from \cite{StojnicCSetamBlock09} and ultimately make them optimal.

\vspace*{0.25in} \noindent {\bf Index Terms: Linear systems of equations;
$\ell_2/\ell_1$-optimization; compressed sensing} .

\end{singlespace}

\section{Introduction}
\label{sec:back}

In last several years the area of compressed sensing has been the
subject of extensive research. Finding the sparsest
solution of an under-determined system of linear equations is one of the focal points of the entire area. Phenomenal results of \cite{CRT,DonohoPol} rigorously proved for the first time that in certain scenarios one can solve an under-determined system of linear equations by solving a linear program in polynomial time. These breakthrough
results then as expected generated enormous amount of research with possible
applications ranging from high-dimensional geometry, image
reconstruction, single-pixel camera design, decoding of linear
codes, channel estimation in wireless communications, to machine
learning, data-streaming algorithms, DNA micro-arrays,
magneto-encephalography etc. (more on the compressed sensing
problems, their importance, and wide spectrum of different
applications can be found in excellent references
\cite{DDTLSKB,JRimaging,BCDH08,CRchannel,VPH,WM08,Olgica,RFPrank}).

In this paper we will be interested in solving the following under-determined system of linear equations \vspace{-0.12in}
\begin{equation}
A\x=\y \label{eq:system}
\end{equation}\vspace{-0.27in}

\noindent where $A$ is an $M\times N$ ($M<N$) measurement matrix and $\y$ is
an $M\times 1$ measurement vector. Moreover, we will be interested in finding the $K$-sparse solution $\x$.
Under $K$-sparse we will in the rest of the paper consider vectors that have at most $K$ nonzero
components. Also throughout the rest of the paper we will often refer to the $K$-sparse solution of (\ref{eq:system}) simply as the solution of (\ref{eq:system}).
Further, we will consider ideally sparse signals; more on the so-called
approximately sparse signals can be found in e.g. \cite{DeVorenoise,
XHapp}. We will also assume the
so-called \emph{linear} regime, i.e. we will assume that $K=\beta N$
and that the number of the measurements is $M=\alpha N$ where
$\alpha$ and $\beta$ are absolute constants independent of $N$.

A particularly successful approach to solving (\ref{eq:system}) assumes
solving the following $\ell_1$-optimization problem\vspace{-0.08in}
\begin{eqnarray}
\mbox{min} & & \|\x\|_{1}\nonumber \\
\mbox{subject to} & & A\x=\y. \label{eq:l1}
\end{eqnarray}\vspace{-0.25in}

\noindent While the literature on using (\ref{eq:l1}) to solve
(\ref{eq:system}) is rapidly growing below we restrict our attention
to two, in our view, the most influential recent works \cite{CRT,DonohoPol}.
Quite remarkably, for certain statistical matrices $A$ in \cite{CRT,DonohoPol} the authors were able to show that if
$\alpha$ and $N$ are given then
any unknown vector $\x$ with no more than $K=\beta N$ (where $\beta$
is an absolute constant dependent on $\alpha$ and explicitly
calculated in \cite{CRT,DonohoPol}) non-zero elements can be recovered by
solving (\ref{eq:l1}). As expected, this assumes that $\y$ was in
fact generated by that $\x$ and given to us. (More on practically very important scenario when the
available measurements are noisy versions of $\y$ can be found in seminal works \cite{CRT,W} as well as in recent developments e.g. \cite{StojnicGenLasso10,StojnicGenSocp10,StojnicPrDepSocp10}.)

\section{Block-sparse signals and $\ell_2/\ell_1$-algorithm}
\label{sec:alg}

What we described in the previous section assumes solving an under-determined system of linear equations with a standard restriction that the solution vector is sparse. Sometimes one may however encounter
applications when the unknown $\x$ in addition to being sparse has a
certain structure as well. The so-called block-sparse vectors are such a type of vectors and will be the main subject of this paper. These vectors and their potential applications and recovery algorithms were
investigated to a great detail in a series of recent references (see e.g. \cite{EldBol09,EKB09,SPH,FHicassp,EMsub,BCDH08,StojnicICASSP09block,StojnicJSTSP09,GaZhMa09,CeInHeBa09}). A related problem
of recovering jointly sparse vectors and its applications were also
considered to a great detail in e.g. \cite{ZeGoAd09,ZeWaSeGoAd08,TGS05,BWDSB05,CH06,CREKD,MEldar,Temlyakov04,VPH,BerFri09,EldRau09,BluDav09,NegWai09} and many
references therein. While various other structures as well as their applications gained significant interest over last few years we here refrain from describing them into fine details and instead refer to nice work of e.g.
\cite{KDXH09,XKAH09,KXAH09,RFPrank}. Since we will be interested in characterizing mathematical properties of solving linear systems that are similar to many of those mentioned above we just state here in brief that from a mathematical point of view in all these cases one attempts to improve the
recoverability potential of the standard algorithms (which are typically similar to the one described in the
previous section) by incorporating the knowledge of the unknown vector
structure.

To get things started we first introduce the block-sparse vectors.
The subsequent exposition will also be somewhat less cumbersome if we assume that
integers $N$ and $d$ are chosen such that $n=\frac{N}{d}$ is an
integer and it represents the total number of blocks that $\x$
consists of. Clearly $d$ is the length of each block. Furthermore,
we will assume that $m=\frac{M}{d}$ is an integer as well and that
$\X_i=\x_{(i-1)d+1:id}, 1\leq i\leq n$ are the $n$ blocks of $\x$
(see Figure \ref{fig:blspmodel}).
\begin{figure}[htb]
\centering
\centerline{\epsfig{figure=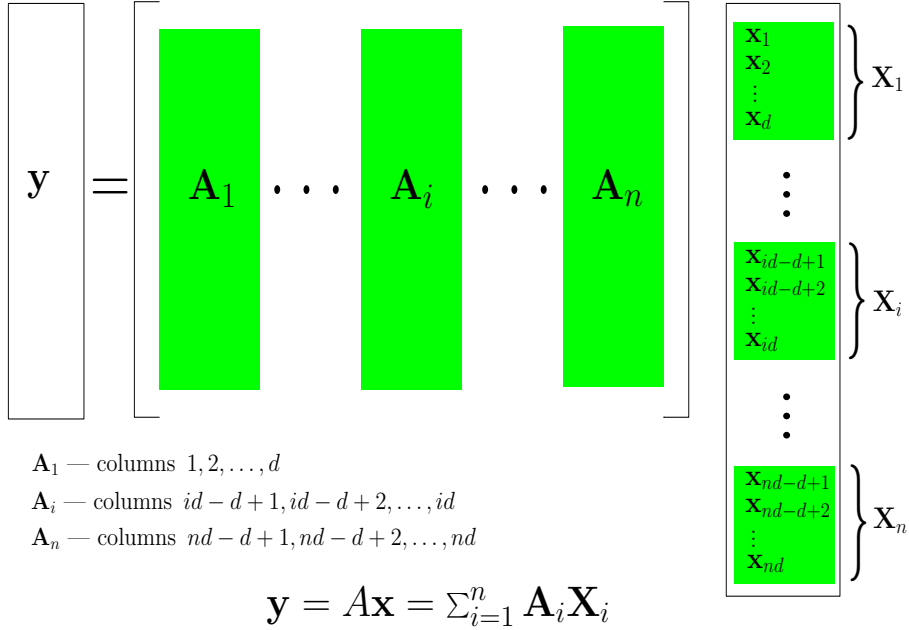,width=12cm,height=10cm}}
\vspace{-0.55in} \caption{Block-sparse model} \label{fig:blspmodel}
\end{figure}
Then we will call any signal $\x$ k-block-sparse if its at most
$k=\frac{K}{d}$ blocks $\X_i$ are non-zero (non-zero block is a block
that is not a zero block; zero block is a block that has all
elements equal to zero). Since $k$-block-sparse signals are
$K$-sparse one could then use (\ref{eq:l1}) to recover the solution
of (\ref{eq:system}). While this is possible, it clearly uses the
block structure of $\x$ in no way. To exploit the block structure of
$\x$ in \cite{SPH} the following polynomial-time algorithm (essentially a
combination of $\ell_2$ and $\ell_1$ optimizations) was considered (see also e.g. \cite{ZeWaSeGoAd08,ZeGoAd09,BCDH08,EKB09,BerFri09})
\begin{eqnarray}
\mbox{min} & & \sum_{i=1}^{n}\|\x_{(i-1)d+1:id}\|_2\nonumber \\
\mbox{subject to} & & A\x=\y. \label{eq:l2l1}\vspace{-.1in}
\end{eqnarray}
Extensive simulations in \cite{SPH} demonstrated that as $d$ grows
the algorithm in (\ref{eq:l2l1}) significantly outperforms the
standard $\ell_1$. The following was shown in \cite{SPH} as well:
let $A$ be an $M\times N$ matrix with a basis of null-space
comprised of i.i.d. Gaussian elements; if
$\alpha=\frac{M}{N}\rightarrow 1$ then there is a constant $d$ such
that all $k$-block-sparse signals $\x$ with sparsity $K\leq \beta N,
\beta\rightarrow \frac{1}{2}$, can be recovered with overwhelming
probability by solving (\ref{eq:l2l1}).
The precise relation between
$d$ and how fast $\alpha\longrightarrow 1$ and $\beta\longrightarrow
\frac{1}{2}$ was quantified in \cite{SPH} as well. In \cite{StojnicICASSP09block,StojnicJSTSP09} we extended the results from
\cite{SPH} and obtained the values of the recoverable block-sparsity for any
$\alpha$, i.e. for $0\leq \alpha \leq 1$. More precisely, for any
given constant $0\leq \alpha \leq 1$ we in \cite{StojnicICASSP09block,StojnicJSTSP09} determined a constant
$\beta=\frac{K}{N}$ such that for a sufficiently large $d$ (\ref{eq:l2l1})
with overwhelming probability
recovers any $k$-block-sparse signal with sparsity less then $K$.
(Under overwhelming probability we in this paper assume
a probability that is no more than a number exponentially decaying in $N$ away from $1$.)


Clearly, for any given constant $\alpha\leq 1$ there is a maximum
allowable value of $\beta$ such that for \emph{any} given $k$-sparse $\x$ in (\ref{eq:system}) the solution of (\ref{eq:l2l1})
is with overwhelming probability exactly that given $k$-sparse $\x$. We will refer to this maximum allowable value of
$\beta$ as the \emph{strong threshold} (see
\cite{DonohoPol,StojnicCSetamBlock09}). Similarly, for any given constant
$\alpha\leq 1$ and \emph{any} given $\x$ with a given fixed location and a given fixed directions of non-zero blocks
there will be a maximum allowable value of $\beta$ such that
(\ref{eq:l2l1}) finds that given $\x$ in (\ref{eq:system}) with overwhelming
probability. We will refer to this maximum allowable value of
$\beta$ as the \emph{weak threshold} and will denote it by $\beta_{w}$ (see, e.g. \cite{StojnicICASSP09,StojnicCSetam09}).

While \cite{StojnicICASSP09block,StojnicJSTSP09} provided fairly sharp strong threshold values they had done so in a somewhat asymptotic sense. Namely, the analysis presented in \cite{StojnicICASSP09block,StojnicJSTSP09} assumed fairly large values of block-length $d$. As such the analysis in \cite{StojnicICASSP09block,StojnicJSTSP09} then provided an ultimate performance limit of $\ell_2/\ell_1$-optimization rather than its performance characterization as a function of a particular fixed block-length.

In our own work \cite{StojnicCSetamBlock09} we extended the results of \cite{StojnicICASSP09block,StojnicJSTSP09} and provided a novel probabilistic framework for performance characterization of (\ref{eq:l2l1}) through which we were finally able to view block-length as a parameter of the system (the heart of the framework was actually introduced in \cite{StojnicCSetam09} and it seemed rather powerful; in fact, we afterwards found hardly any sparse type of problem that the framework was not able to handle with almost impeccable precision). Using the framework we obtained lower bounds on $\beta_w$. These lower bounds were in an excellent numerical agreement with the values obtained for $\beta_w$ through numerical simulations. One would therefore be tempted to believe that our lower bounds from \cite{StojnicCSetamBlock09} are tight. In this paper we design a mechanism that can be used to compute the upper bounds on $\beta_w$ (as it was the case with the framework of \cite{StojnicCSetamBlock09}, the new framework does not seem to be restricted in any way to the $\ell_2/\ell_1$ type of sparsity). The obtained upper bounds will match the lower bounds computed in \cite{StojnicCSetamBlock09} and essentially make them optimal. We should also point out that in our recent work \cite{StojnicUpper10} we created results similar in flavor to those that we will present here but are valid for general under-determined systems with sparse solutions (i.e. not necessarily those with block-sparse solutions). When viewed in that context the results presented here are a block analogue to those presented in \cite{StojnicUpper10}.

Before going through the details of our own approach we briefly take a look back and mention a few other known approaches from a vast literature cited above that have recently attracted significant amount of attention. The first thing one can think of when facing the block-structured unknown vectors is how to extend results known in the non-block (i.e. standard) case. In \cite{TGS05} the standard OMP (orthogonal matching pursuit) was generalized so that it can handle the jointly-sparse vectors more efficiently and improvements over the standard OMP were demonstrated. In \cite{BCDH08,EMsub} algorithms similar to the one from this paper were considered. It was explicitly shown through the block-RIP (block-restricted isometry property) type of analysis (which essentially extends to the block case the concepts introduced in \cite{CRT} for the non-block scenario) that one can achieve improvements in recoverable thresholds compared to the non-block case. Also, important results were obtained in \cite{EldRau09} where it was shown (also through the block-RIP type of analysis) that if one considers average case recovery of jointly-sparse signals the improvements in recoverable thresholds over the standard non-block signals are possible (of course, trivially, jointly-sparse recovery offers no improvement over the standard non-block scenario in the worst case). All these results provided a rather substantial basis for belief that the block-sparse recovery can provably be significantly more successful than the standard non-block one (as mentioned above, extensive simulations in \cite{SPH} confirmed such expectations). In \cite{StojnicCSetamBlock09} we then provided further results in this direction and here we establish their optimality.

We organize the rest of the paper in the following way. In Section
\ref{sec:theorems} we introduce two key theorems that will be the heart of our subsequent analysis.
In Section
\ref{sec:unsigned} we create the mechanism for computing the upper bounds on $\beta_w$.
Finally, in Section \ref{sec:discuss} we discuss obtained results.

\section{Key theorems}
\label{sec:theorems}

In this section we introduce two useful theorems that will be of key importance in our subsequent
analysis. First we recall on a null-space characterization of
$A$ that guarantees that the solution of
(\ref{eq:l2l1}) is the $k$-block-sparse solution of (\ref{eq:system}). Moreover, the characterization will establish this for any $\beta n$-block-sparse $\x$ with a fixed location and a fixed combination of directions of nonzero blocks. Since the analysis will clearly be irrelevant with respect to what particular location and what particular combination of directions of nonzero blocks are chosen, we can for the simplicity of the exposition and without loss of generality assume that the blocks $\X_{1},\X_{2},\dots,\X_{n-k}$ of $\x$ are equal to zero and the blocks $\X_{n-k+1},\X_{n-k+2},\dots,\X_n$ of $\X$ have fixed directions. Moreover, (as mentioned earlier) throughout the paper we will call such an $\x$ $k$-block-sparse. Under this assumption we have the following theorem from \cite{StojnicICASSP09block} that provides
such a characterization (similar characterizations that relate to the non-block case can be found in
\cite{DH01,FN,LN,Y,XHapp,SPH,DTbern}; furthermore, if instead of $\ell_1$ one, for
example, uses an $\ell_q$-optimization ($0<q<1$) in (\ref{eq:l2l1}) then
characterizations similar to the ones from
\cite{DH01,FN,LN,Y,XHapp,SPH,DTbern,GN03,GN04,GN07} can be derived as well.
\begin{theorem}(Nonzero part of $\x$ has fixed directions and location)
Assume that an $dm\times dn$ matrix $A$ is given. Let $\x$
be a $k$-block-sparse. Also let $\X_1=\X_2=\dots=\X_{n-k}=0$ and let the directions of vectors $\X_{n-k+1},\X_{n-k+2},\dots,\X_n$ be fixed.
Further, assume that $\y=A\x$ and that $\w$ is
an $dn\times 1$ vector with blocks $\W_i,i=1,\dots,n$, defined in a way analogous to the definition of blocks $\X_i$. If
\begin{equation}
(\forall \w\in \textbf{R}^{dn} | A\w=0) \quad  -\sum_{i=n-k+1}^n \frac{\X_i^T\W_i}{\|\X_i\|_2}<\sum_{i=1}^{n-k}\|\W_{i}\|_2.
\label{eq:thmeqgenweak1}
\end{equation}
then the solution of (\ref{eq:l2l1}) is $\x$. Moreover, if
\begin{equation}
(\exists \w\in \textbf{R}^{dn} | A\w=0) \quad  -\sum_{i=n-k+1}^n \frac{\X_i^T\W_i}{\|\X_i\|_2}>\sum_{i=1}^{n-k}\|\W_{i}\|_2.
\label{eq:thmeqgenweak2}
\end{equation}
then there will be a $k$-block-sparse $\x$ from the above defined set that satisfies (\ref{eq:system}) and is not the solution of (\ref{eq:l2l1}).
\label{thm:thmgenweak}
\end{theorem}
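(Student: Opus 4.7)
The plan is to reduce the statement to a null-space argument tailored to the $\ell_2/\ell_1$ objective. Since every feasible point of (\ref{eq:l2l1}) can be written as $\x+\w$ with $\w$ in the null space of $A$, uniqueness of $\x$ as the minimizer is equivalent to $\sum_{i=1}^{n}\|\X_i+\W_i\|_2 > \sum_{i=n-k+1}^{n}\|\X_i\|_2$ for every nonzero such $\w$. The whole argument is about matching this inequality, in the two appropriate directions, to (\ref{eq:thmeqgenweak1}) and (\ref{eq:thmeqgenweak2}).

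For the forward direction I would bound the left-hand side block by block. On the zero blocks $i\le n-k$ one simply has $\|\X_i+\W_i\|_2=\|\W_i\|_2$; on the nonzero blocks $i>n-k$ a Cauchy--Schwarz lower bound gives
\begin{equation*}
\|\X_i+\W_i\|_2\,\|\X_i\|_2 \ge \X_i^T(\X_i+\W_i)
\quad\Longrightarrow\quad
\|\X_i+\W_i\|_2 \ge \|\X_i\|_2 + \frac{\X_i^T\W_i}{\|\X_i\|_2}.
\end{equation*}
Summing these and subtracting $\sum_{i>n-k}\|\X_i\|_2$ shows that the required strict inequality is implied by
\begin{equation*}
\sum_{i=1}^{n-k}\|\W_i\|_2 + \sum_{i=n-k+1}^{n}\frac{\X_i^T\W_i}{\|\X_i\|_2} > 0,
\end{equation*}
which is exactly the hypothesis (\ref{eq:thmeqgenweak1}). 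This gives the sufficient direction.

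For the converse I would build an explicit counterexample from a $\w$ satisfying (\ref{eq:thmeqgenweak2}). Choose the blocks $\X_{n-k+1},\dots,\X_n$ to point in the prescribed directions but with a common large magnitude $s$, so that $\x$ still lies in the class described by the theorem. Expanding the square root in $\|\X_i+\W_i\|_2 = \sqrt{\|\X_i\|_2^2 + 2\X_i^T\W_i + \|\W_i\|_2^2}$ for $i>n-k$ yields
\begin{equation*}
\|\X_i+\W_i\|_2 = \|\X_i\|_2 + \frac{\X_i^T\W_i}{\|\X_i\|_2} + O(1/s),
\end{equation*}
so the objective gap $\sum_i\|\X_i+\W_i\|_2 - \sum_{i>n-k}\|\X_i\|_2$ converges as $s\to\infty$ to $\sum_{i\le n-k}\|\W_i\|_2 + \sum_{i>n-k}\X_i^T\W_i/\|\X_i\|_2$, which is strictly negative by (\ref{eq:thmeqgenweak2}). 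Hence for $s$ large enough $\x+\w$ strictly beats $\x$ in (\ref{eq:l2l1}), proving that this $\x$ is not the minimizer.

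The main point to handle carefully is the converse: one must check that $s$ can be chosen large enough to dominate the $O(1/s)$ error uniformly while keeping $\x$ inside the prescribed class of fixed support and fixed nonzero directions. Both requirements are essentially automatic in the construction above, and no compactness or perturbation argument beyond the explicit expansion is needed. The core of the proof is really just the block Cauchy--Schwarz inequality used in the forward direction, together with its tightness in the limit of large block norms used in the converse.
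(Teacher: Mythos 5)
Your proposal is correct and is essentially the argument the paper relies on: the paper's own proof merely cites Corollary 2 of \cite{StojnicCSetamBlock09} for the sufficient direction (which is exactly your block-wise Cauchy--Schwarz null-space argument) and the scaling/limit construction of \cite{StojnicUpper10} for the converse (which is exactly your large-$s$ expansion exploiting that $\X_i^T\W_i/\|\X_i\|_2$ is invariant under rescaling the nonzero blocks). No gaps; you have simply written out in full what the paper delegates to references.
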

\begin{proof}
The first part follows directly from Corollary $2$ in \cite{StojnicCSetamBlock09}. The second part follows by combining (adjusting to the block case) the first part and the ideas of the second part of Theorem $1$ in \cite{StojnicUpper10}.
\end{proof}

Before proceeding further we would like to make a point similar to the one we have made in \cite{StojnicUpper10}. In our opinion the first part of the theorem that was put forth in \cite{StojnicCSetamBlock09} (and in essence in \cite{StojnicICASSP09}) is the unsung hero of all the success achieved in the thresholds analysis through various frameworks that we eventually designed. As mentioned in \cite{StojnicUpper10}, it was fist recognized in \cite{StojnicICASSP09} that characterizations of the type given in the first part of the above theorem could lead to the optimal threshold performance. As it became later clear the analysis in \cite{StojnicICASSP09} stopped somewhat short of the ultimate goal and it achieved only a moderate success in performance characterization of $\ell_1$-optimization. While the analysis of \cite{StojnicCSetam09} formally completed the task of evaluating fairly precisely the achievable thresholds it is the first part of the above theorem (or rather its a non-block equivalent from \cite{StojnicICASSP09}) that made everything possible. Along the same lines, while the framework created in \cite{StojnicCSetam09} was good enough to fairly precisely evaluate the achievable thresholds it is the first part of the above theorem that made the block generalization of results from \cite{StojnicCSetam09} possible in \cite{StojnicCSetamBlock09}.

Now, with regard to the second part of the above theorem, the story is of course similar. Its proof is rather simple and in fact almost completely follows the ideas of the non-block case (see, e.g. \cite{StojnicUpper10,DH01,GN03}). It is just that we never presented it before. Basically, we did not find the second part of the theorem to be of any (let alone much) use if one were to create the lower bounds on the thresholds. However, as the reader might guess, if one is concerned with proving the upper bounds the second part of the above theorem becomes the same type of the unsung hero that the first one was for the success of the framework of \cite{StojnicCSetamBlock09}. Below we use it to create a machinery as powerful as the one from \cite{StojnicCSetamBlock09} that provides the corresponding framework for upper-bounding the thresholds.

Before moving to the design of the framework, we would also like to say a few words about a possible design of the matrix $A$ that would satisfy the conditions of Theorem \ref{thm:thmgenweak}. Designing matrix $A$ such that
(\ref{eq:thmeqgenweak1}) holds would not be that hard.
The problem is that one does not know \emph{a priori} which $k$ blocks of $\x$ will be nonzero and which directions they will have. That would essentially force one to design $A$ such that (\ref{eq:thmeqgenweak1}) holds for any subset of $\{1,2,\dots,n\}$ of cardinality $k$ and any combination of directions on that subset. If one
assumes that $m$ and $k$ are proportional to $n$ (the case of our
interest in this paper) this is an enormous combinatorial task and the construction of such a deterministic
matrix $A$ is clearly not easy (in fact, as observed in e.g. \cite{StojnicCSetam09,StojnicCSetamBlock09} one may say that its a non-block counterpart is one of the most fundamental open problems in the area of theoretical compressed sensing; more on an equally important inverse problem of checking if a given matrix satisfies the condition of Theorem \ref{thm:thmgenweak} for any subset of $\{1,2,\dots,n\}$ of cardinality $k$ and any combination of block directions, the interested reader can find in \cite{JudNem08,DaEl08}). On the other hand, turning to
random matrices significantly simplifies things. As we will see later in the paper, Gaussian random matrices $A$ will turn out to be a very convenient choice.
The following phenomenal result from \cite{Gordon88} that relates to such matrices will be the key ingredient in the analysis that will follow.
\begin{theorem}(\cite{Gordon88})
\label{thm:Gordonmesh1} Let $X_{ij}$ and $Y_{ij}$, $1\leq i\leq n,1\leq j\leq m$, be two centered Gaussian processes which satisfy the following inequalities for all choices of indices
\begin{enumerate}
\item $E(X_{ij}^2)=E(Y_{ij}^2)$
\item $E(X_{ij}X_{ik})=E(Y_{ij}Y_{ik})$
\item $E(X_{ij}X_{lk})=E(Y_{ij}Y_{lk}), i\neq l$.
\end{enumerate}
Then
\begin{equation*}
P(\bigcap_{i}\bigcup_{j}(X_{ij}\geq \lambda_{ij}))\leq P(\bigcap_{i}\bigcup_{j}(Y_{ij}\geq \lambda_{ij})).
\end{equation*}
\end{theorem}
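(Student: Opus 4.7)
The plan is to prove this by the classical interpolation argument that underlies Slepian's, Kahane's, and Gordon's inequalities. Since the target event $\bigcap_i\bigcup_j(Z_{ij}\ge\lambda_{ij})$ is not smooth, I would first replace its indicator by a smooth product-form surrogate
\[
F(z)=\prod_{i=1}^{n}\Bigl(1-\prod_{j=1}^{m}\bigl(1-\psi(z_{ij}-\lambda_{ij})\bigr)\Bigr),
\]
where $\psi:\mathbb{R}\to[0,1]$ is smooth and nondecreasing with $\psi\equiv 0$ on $(-\infty,0]$ and $\psi\equiv 1$ on $[\varepsilon,\infty)$. Then $F(z)\uparrow \mathbf{1}\{\bigcap_i\bigcup_j(z_{ij}\ge\lambda_{ij})\}$ as $\varepsilon\downarrow 0$, so it suffices to establish $E[F(X)]\le E[F(Y)]$.

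Next I would interpolate. Realizing $X$ and $Y$ on a common probability space as independent copies (valid since only their laws enter the conclusion), set
\[
Z_{ij}(t)=\sqrt{1-t}\,X_{ij}+\sqrt{t}\,Y_{ij},\qquad \phi(t)=E\bigl[F(Z(t))\bigr],
\]
so that $\phi(0)=E F(X)$ and $\phi(1)=E F(Y)$. Differentiating in $t$ and applying Gaussian integration by parts (Stein's lemma) to each coordinate of $Z(t)$ produces the standard identity
\[
\phi'(t)=\tfrac{1}{2}\sum_{(ij),(lk)}\bigl[E(Y_{ij}Y_{lk})-E(X_{ij}X_{lk})\bigr]\,E\bigl[\partial^2_{(ij)(lk)}F(Z(t))\bigr].
\]
The strategy is then to verify term by term that $\phi'(t)\ge 0$, matching the sign of each covariance difference to the sign of the corresponding mixed partial of $F$. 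Diagonal terms $(ij)=(lk)$ are killed by hypothesis~1; within-row terms ($i=l$, $j\ne k$) are handled by hypothesis~2; across-row terms ($i\ne l$) are handled by hypothesis~3. Integrating from $0$ to $1$ and then sending $\varepsilon\downarrow 0$ yields the claim.

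The main obstacle is the sign analysis of the second partials of $F$: I would need to verify $\partial^2_{(ij)(ik)}F\le 0$ for $j\ne k$ (same-row partials are nonpositive, reflecting the concave aggregation $1-\prod_j(1-\psi)$ inside a row) and $\partial^2_{(ij)(lk)}F\ge 0$ for $i\ne l$ (cross-row partials are nonnegative, reflecting the outer product over rows). Matching these sign patterns against the covariance hypotheses is what drives the inequality; in particular, the same computation extends verbatim to the usual Gordon setting where hypothesis~2 is relaxed to $E(X_{ij}X_{ik})\ge E(Y_{ij}Y_{ik})$ and hypothesis~3 to $E(X_{ij}X_{lk})\le E(Y_{ij}Y_{lk})$, since each surviving contribution to $\phi'(t)$ is then manifestly nonnegative. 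The remaining ingredients — regularity of $F$, dominated convergence to differentiate under the expectation, and the passage $\varepsilon\downarrow 0$ — are routine once the sign bookkeeping is in place.
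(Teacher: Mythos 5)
The paper offers no proof of this statement at all: Theorem~\ref{thm:Gordonmesh1} is imported verbatim from \cite{Gordon88} and used as a black box, so there is nothing in the text to compare your argument against line by line. That said, your interpolation sketch is essentially the standard (and correct) proof of Gordon's comparison lemma: the product-form smoothing $F(z)=\prod_i\bigl(1-\prod_j(1-\psi(z_{ij}-\lambda_{ij}))\bigr)$ of the indicator of $\bigcap_i\bigcup_j\{z_{ij}\ge\lambda_{ij}\}$, the smart path $Z(t)=\sqrt{1-t}\,X+\sqrt{t}\,Y$ with independent copies, the Gaussian integration-by-parts identity for $\phi'(t)$, and the sign bookkeeping $\partial^2_{(ij)(ik)}F\le 0$ for $j\ne k$ (within a row the aggregation $1-\prod_j(1-\psi)$ produces a negative cross term) versus $\partial^2_{(ij)(lk)}F\ge 0$ for $i\ne l$ (a product of two nonnegative single-row derivatives times nonnegative factors) are exactly the ingredients of Gordon's original argument. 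Two remarks. First, as the theorem is literally stated here all three covariance conditions are \emph{equalities}, which together force $X$ and $Y$ to have identical covariance structure; in that degenerate form the conclusion holds trivially with equality and no interpolation is needed. Your proof is really a proof of the genuinely useful version (the one Gordon proves and the one the paper implicitly relies on via Lemma~\ref{eq:unsignedlemma}), in which condition~2 is $E(X_{ij}X_{ik})\ge E(Y_{ij}Y_{ik})$ and condition~3 is $E(X_{ij}X_{lk})\le E(Y_{ij}Y_{lk})$ for $i\ne l$, and your matching of covariance signs to second-partial signs is the right one for that version. Second, the only points you defer as ``routine'' --- monotonicity of the family $\psi_\varepsilon$ so that $F\uparrow$ the indicator, the fact that the $\varepsilon\downarrow 0$ limit is the open event $\{z_{ij}>\lambda_{ij}\}$ (harmless since the Gaussians charge no hyperplanes), and dominated convergence to differentiate under the expectation (harmless since $F$ and its derivatives are bounded) --- are indeed routine. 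I see no gap.
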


\section{Upper-bounding $\beta_w$ -- general $\x$} \label{sec:unsigned}

In this section we probabilistically analyze validity of the null-space characterization given in the second part of Theorem \ref{thm:thmgenweak}. Essentially, we will design a mechanism for computing upper bounds on $\beta_w$ (in fact, since it will be slightly more convenient we will actually determine lower bounds on $\alpha$; that is of course conceptually  the same as finding the upper-bounds on $\beta$).

We start by defining a quantity $\tau$ that will play one of the key roles below
\begin{eqnarray}
\tau(A) =  \min & & (\sum_{i=1}^{n-k}\|\W_i\|_2+\sum_{i=n-k+1}^{n}\frac{\X_i^T\W_i}{\|\X_i\|_2}) \nonumber \\
\mbox{subject to} & &  A\w=0\nonumber \\
& & \|\w\|_2\leq 1.\label{eq:deftau}
\end{eqnarray}
Now, we will in the rest of the paper assume that the entries of $A$ i.i.d. standard normal random variables. Then one can say that for any $\alpha$ and $\beta$ for which
\begin{equation}
\lim_{n\rightarrow\infty}P(\tau(A)<0)=1, \label{eq:mcrit}
\end{equation}
there is a $k$-block-sparse $\x$ (from a set of $\x$'s with a given fixed location of nonzero blocks and a given fixed combination of their directions) which (\ref{eq:l2l1}) with probability $1$ fails to find. For a fixed $\beta$ our goal will be to find the largest possible $\alpha$ for which (\ref{eq:mcrit}) holds, i.e. for which (\ref{eq:l2l1}) fails with probability $1$.

Before going through the randomness of the problem and evaluation of $P(\tau(A)<0)$ we will try to provide a more explicit expression for $\tau$ than the one given by the optimization problem in (\ref{eq:deftau}).
As a first step we write the Lagrange dual of (\ref{eq:deftau}) over $\w$
\begin{eqnarray}
\tau(A) =  \max_{\nu,\gamma}\min_{\w} & & (\sum_{i=1}^{n-k}\|\W_i\|_2+\sum_{i=n-k+1}^{n}\frac{\X_i^T\W_i}{\|\X_i\|_2})+\nu^T A\w +\gamma\sum_{i=1}^{n}\|\W_i\|^2-\gamma) \nonumber \\
\mbox{subject to} & &  \gamma\geq 0.\label{eq:deftau5}
\end{eqnarray}
To simplify the exposition we set
\begin{eqnarray}
\psi^{(i)} & = & \nu^T A_i,n-k+1\leq i\leq n-k\nonumber \\
\psi^{(i)} & = & \frac{\X_i}{\|\X_i\|_2}+\nu^T A_i,n-k+1\leq i\leq n,\label{eq:defpsi}
\end{eqnarray}
and assume that $\psi^{(i)}_j,1\leq j\leq d$, is the $j$-th component of $\psi^{(i)}$. Then one can rewrite (\ref{eq:deftau5}) in the following way
\begin{eqnarray}
\tau(A) =  \max_{\nu,\gamma}\min_{\w} & & (\sum_{i=1}^{n-k}\|\W_i\|_2+\sum_{i=1}^{n}\psi^{(i)}\W_i +\gamma\sum_{i=1}^{n}\|\W_i\|^2-\gamma) \nonumber \\
\mbox{subject to} & &  \gamma\geq 0.\label{eq:deftau6}
\end{eqnarray}
Let
\begin{equation}
f_1(\nu,\gamma,\w)=\sum_{i=1}^{n-k}\|\W_i\|_2+\sum_{i=1}^{n}\psi^{(i)}\W_i +\gamma\sum_{i=1}^{n}\|\W_i\|^2-\gamma.\label{eq:deff1}
\end{equation}
We then proceed by solving the inner minimization in (\ref{eq:deftau6}). Since $f_1(\cdot)$ is convex in $\w$ we simply find the optimal $\w$ by equaling the derivative of $f_1(\cdot)$ with respect to $\w$ to zero. We then have
\begin{eqnarray}
\frac{df_1(\nu,\gamma,\w)}{d\W_i} & = & \frac{\W_i^T}{\|\W_i\|_2}+\psi^{(i)} +2\gamma\W_i^T=0, 1\leq i\leq n-k\nonumber \\
\frac{df_1(\nu,\gamma,\w)}{d\W_i} & = & \psi_i +2\gamma\W_i=0, n-k+1\leq i\leq n,
\label{eq:deftau8}
\end{eqnarray}
where $0$'s are obviously $d$-dimensional row vectors of all zeros. At this point we will make an assumption that the above system can be solved. If it indeed can be solved the solution must satisfy
\begin{equation}
\W_i^T(2\gamma+\frac{1}{\|\W_i\|_2})=-\psi^{(i)}, 1\leq i\leq n-k,\label{eq:deftau9}
\end{equation}
and one would have
\begin{equation}
2\|\W_i\|_2\gamma+1=\|\psi^{(i)}\|_2, 1\leq i\leq n-k.\label{eq:deftau10}
\end{equation}
After plugging the value of $\|\W_i\|_2, 1\leq i\leq n-k$, back in (\ref{eq:deftau9}) we have
\begin{equation}
\W_i^T=-\psi^{(i)}\frac{\|\psi^{(i)}\|_2-1}{2\gamma\|\psi^{(i)}\|_2}, 1\leq i\leq n-k.\label{eq:deftau11}
\end{equation}
We should now note that for any $i\in\{1,\dots,n-k\}$, $\W_i$ from (\ref{eq:deftau11}) is indeed the solution of (\ref{eq:deftau8}) if $\|\psi^{(i)}\|_2\geq 1$. Otherwise one has $\W_i=0$ (here obviously $0$ stands for a column vector of $d$ zeros). On the other hand, from the second set of equations in (\ref{eq:deftau8}) one easily has
\begin{equation}
\W_i^T=-\frac{\psi^{(i)}}{2\gamma}, n-k+1\leq i\leq n.\label{eq:deftau12}
\end{equation}
Plugging the results from (\ref{eq:deftau11}) and (\ref{eq:deftau12}) back in (\ref{eq:deff1}) we obtain
\begin{multline}
\min_{\w}f_1(\nu,\gamma,\w)=\sum_{i=1}^{n-k}\frac{|\|\psi^{(i)}\|_2-1|_{\geq 0}}{2\gamma}-\sum_{i=1}^{n-k}\|\psi^{(i)}\|_2\frac{|\|\psi^{(i)}\|_2-1|_{\geq 0}}{2\gamma} +\sum_{i=1}^{n-k}\frac{(|\|\psi^{(i)}\|_2-1|_{\geq 0})^2}{4\gamma}\\-\sum_{i=n-k+1}^{n}\frac{\|\psi^{(i)}\|_2^2}{4\gamma}-\gamma.\label{eq:deftau13}
\end{multline}
where
\begin{equation}
|\|\psi^{(i)}\|_2-1|_{\geq 0}=\begin{cases}\|\psi^{(i)}\|_2-1 & if \quad \|\psi^{(i)}\|_2-1\geq 0\\
0 & otherwise\end{cases}.
\end{equation}
Transforming \ref{eq:deftau13} further we have
\begin{multline}
\min_{\w}f_1(\nu,\gamma,\w)=-\sum_{i=1}^{n-k}\frac{(|\|\psi^{(i)}\|_2-1|_{\geq 0})^2}{4\gamma}-\sum_{i=n-k+1}^{n}\frac{\|\psi^{(i)}\|_2^2}{4\gamma}-\gamma.\label{eq:deftau14}
\end{multline}
A combination of (\ref{eq:deftau6}) and (\ref{eq:deftau14}) gives
\begin{eqnarray}
\tau(A) =  \max_{\nu,\gamma} & & -\sum_{i=1}^{n-k}\frac{(|\|\psi^{(i)}\|_2-1|_{\geq 0})^2}{4\gamma}-\sum_{i=n-k+1}^{n}\frac{\|\psi^{(i)}\|_2^2}{4\gamma}-\gamma\nonumber \\
\mbox{subject to} & &  \gamma\geq 0.\label{eq:deftau15}
\end{eqnarray}
After solving over $\gamma$ we finally have
\begin{equation}
\tau(A)=-\min_{\nu} \sqrt{\sum_{i=1}^{n-k}(|\|\psi^{(i)}\|_2-1|_{\geq 0})^2+\sum_{i=n-k+1}^{n}\|\psi^{(i)}\|_2^2}.\label{eq:finaltau}
\end{equation}
The following rather small trick in rewriting the previous equation turns out to be useful
\begin{eqnarray}
\tau(A) = -\min_{\nu,\z} & & \sqrt{\sum_{i=1}^{n-k}(\|\psi^{(i)}\|_2-\z_i)^2+\sum_{i=n-k+1}^{n}\|\psi^{(i)}\|_2^2}\nonumber \\
\mbox{subject to} & & 0\leq \z_i\leq 1,1\leq i\leq n-k.\label{eq:finaltau1}
\end{eqnarray}
Now we set
\begin{equation}
f_2(\nu,A,\z)=\sqrt{\sum_{i=1}^{n-k}(\|\psi^{(i)}\|_2-\z_i)^2+\sum_{i=n-k+1}^{n}\|\psi^{(i)}\|_2^2}.\label{eq:deff2}
\end{equation}
Let $\Theta_i,n-k+1\leq i\leq n$, be $d\times d$ unitary matrices such that $\Theta_i\X_i=[1,0,0,0,\dots,0]^T$. Further let $Q^{(1)}\in R^{(n-k+dk)\times (dn)}$
be a matrix that has zeros everywhere except
\begin{eqnarray}
Q^{(1)}_{i,d(i-1)+1:di} & = & \q_i, 1\leq i\leq n-k\nonumber \\
Q^{(1)}_{d(i-1)+1:di,d(i-1)+1:di} & = & \Theta_i,n-k+1\leq i\leq n,\label{defQ1}
\end{eqnarray}
and let $\|\q_i\|_2=1,1\leq i\leq n-k$. One can then write
\begin{equation}
f_2(\nu,A,\z)=\max_{\a,Q^{(1)}}\a^T(Q^{(1)}(A^T\nu-\x^{(1)})-\z^{(1)}),\label{eq:f21}
\end{equation}
where $\x^{(1)}\in R^{dn}$, $\z^{(1)}\in R^{n-k+dk}$, $\a\in R^{n-k+dk}$, and
\begin{eqnarray}
\|\a\|_2 & = & 1\nonumber \\
\x^{(1)}_i & = & 0,1\leq i\leq n-k\nonumber \\
\x^{(1)}_i & = & \_i,d(n-k)+1\leq i\leq dn\nonumber \\
\z^{(1)}_i & = & \z_i,1\leq i\leq (n-k)\nonumber \\
\z^{(1)}_i & = & 0,n-k+1\leq i\leq n-k+dk.\label{deaxhzh}
\end{eqnarray}
Now let $Q\in R^{(n-k+dk)\times (dn)}$
be a matrix that has zeros everywhere except
\begin{eqnarray}
Q_{i,d(i-1)+1:di} & = & \q_i, 1\leq i\leq n-k\nonumber \\
Q_{d(n-k)+1:dn,d(n-k)+1:dn} & = & I,\label{defQ}
\end{eqnarray}
and again as above $\|\q_i\|_2=1,1\leq i\leq n-k$.
Furthermore, since $Q^{(1)}_{d(n-k)+1:dk,d(n-k)+1:dn}A_{d(n-k)+1:dn,1:dm}$ has the same distribution as $A_{d(n-k)+1:dn,1:dm}$ for the statistical purposes that we will consider later in the paper one can rewrite (\ref{eq:f21}) in the following way
\begin{equation}
f_2(\nu,A,\z)=\max_{\a}\a^T(Q(A^T\nu-\x^{(1)})-\z^{(1)}),\label{eq:f22}
\end{equation}
where $\z^{(1)}\in R^{dn}$ and $\a\in R^{n-k+dk}$ are as above and $\x^{(2)}\in R^{dn}$ has zeros everywhere except
\begin{eqnarray}
\x^{(2)}_{di+1}  =  1,n-k\leq i\leq n-1.\label{defx2}
\end{eqnarray}
At this point we are almost ready to switch to the probabilistic aspect of the analysis. To that end we do the last piece of transformation. Namely, we set $\z^{(2)}=\x^{(2)}+\z^{(1)}$ and rewrite (\ref{eq:finaltau}) as
\begin{eqnarray}
\tau(A)= -\min_{\z^{(2)},\nu} \max_{\|\a\|_2=1,\q_i} &  &
\a^T Q A^T \nu-\a^T\z^{(2)}  \nonumber \\
\mbox{subject to}& & 0\leq \z^{(2)}_i\leq 1, 1\leq i \leq n-k\nonumber \\
& & \z^{(2)}_{n-k+d(i-1)+2:n-k+di}=0_{1:d-1}, 1\leq i\leq k\nonumber \\
& & \z^{(2)}_{n-k+d(i-1)+1}=1, 1\leq i\leq k\nonumber \\
& & \|\q_i\|_2=1,1\leq i\leq n-k
\label{eq:finaldet}
\end{eqnarray}
where $0_{1:d-1}$ is a vector of $d-1$ zeros. Also, we will call $Z$ the set of all $\z^{(2)}$ that are feasible in (\ref{eq:finaldet}).
Now we are ready to invoke the results from Theorem \ref{thm:Gordonmesh1}. We do so through the following modification of the corresponding lemma from \cite{StojnicUpper10} which itself is a slightly modified version of Lemma 3.1 from \cite{Gordon88} (Lemma 3.1 is of course a direct consequence of Theorem \ref{thm:Gordonmesh1} and the backbone of the escape through a mesh theorem utilized in \cite{StojnicCSetam09}).
\begin{lemma}
Let $A$ be an $dm\times dn$ matrix with i.i.d. standard normal components. Let $\g$ and $\h$ be $dn\times 1$ and $dm\times 1$ vectors, respectively, with i.i.d. standard normal components. Also, let $g$ be a standard normal random variable and let $Z$ and $Q$ be as defined above. Then
\begin{equation}
\hspace{-.9in}P(\min_{\z^{(2)}\in Z,\nu\in R^{dm}\setminus 0}\max_{\|\a\|_2=1,Q}(\a^TQ A^T\nu +\|\nu\|_2 g-\zeta_{\a,\z^{(2)},\nu})\geq 0)\geq P(\min_{\z^{(2)}\in Z,\nu\in R^{dm}\setminus 0}\max_{\|\a\|_2=1,Q}(\|\nu\|_2\a^TQ\g+\sum_{i=1}^{dm}\h_i\nu_i-\zeta_{\a,\z^{(2)},\nu})\geq 0).\label{eq:problemma}
\end{equation}\label{eq:unsignedlemma}
\end{lemma}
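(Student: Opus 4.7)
My plan is to recognize the stated inequality as a direct application of the Gordon-type comparison inequality (Theorem \ref{thm:Gordonmesh1}). Take the outer (min) index to be $i=(\z^{(2)},\nu)$ and the inner (max) index to be $j=(\a,Q)$, and introduce the two centered Gaussian processes
\begin{equation*}
X_{ij} = \a^T Q A^T\nu + \|\nu\|_2\,g,\qquad Y_{ij} = \|\nu\|_2\,\a^T Q\,\g + \sum_{p=1}^{dm}\h_p\,\nu_p,
\end{equation*}
where $A,g,\g,\h$ are mutually independent and have i.i.d.\ standard normal entries. With the deterministic threshold $\lambda_{ij} = \zeta_{\a,\z^{(2)},\nu}$, the two events appearing in the lemma are exactly $\bigcap_i\bigcup_j\{X_{ij}\ge\lambda_{ij}\}$ and $\bigcap_i\bigcup_j\{Y_{ij}\ge\lambda_{ij}\}$, so the claim reduces to $P(\bigcap_i\bigcup_j\{X_{ij}\ge\lambda_{ij}\})\ge P(\bigcap_i\bigcup_j\{Y_{ij}\ge\lambda_{ij}\})$.

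Throughout I would use one structural identity: for every feasible $Q$ and every $\a$ with $\|\a\|_2=1$, $\|Q^T\a\|_2 = 1$. This holds because $QQ^T = I_{n-k+dk}$; the first $n-k$ rows of $Q$ are unit vectors $\q_i$ supported on disjoint block-columns, and the last $dk$ rows form the identity on a disjoint column range. From this, Conditions (1) and (2) of Theorem \ref{thm:Gordonmesh1} are immediate: both processes have variance $2\|\nu\|_2^2$, and for a fixed outer index the covariances of $X$ and of $Y$ both evaluate to $\|\nu\|_2^2\,\a_1^T Q_1 Q_2^T\a_2 + \|\nu\|_2^2$.

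The core step is Condition (3). A short covariance computation, using mutual independence of the Gaussian ingredients and $E[A_{pq}A_{p'q'}] = \delta_{pp'}\delta_{qq'}$, gives
\begin{equation*}
E[X_{i_1 j_1}X_{i_2 j_2}] - E[Y_{i_1 j_1}Y_{i_2 j_2}] = \bigl(\nu_1^T\nu_2 - \|\nu_1\|_2\|\nu_2\|_2\bigr)\bigl(\a_1^T Q_1 Q_2^T \a_2 - 1\bigr).
\end{equation*}
Both factors are non-positive: the first by Cauchy--Schwarz on $\nu_1,\nu_2$, the second because $|\a_1^T Q_1 Q_2^T\a_2|\le \|Q_1^T\a_1\|_2\|Q_2^T\a_2\|_2 = 1$ by the structural identity above. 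Hence the difference is non-negative, which, after swapping the roles of $X$ and $Y$ in Theorem \ref{thm:Gordonmesh1}, is precisely the covariance sign needed to flip the probability inequality into the direction stated.

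The main technical obstacle is that the index set for $(\z^{(2)},\nu,\a,Q)$ is continuous rather than discrete as formally required by Theorem \ref{thm:Gordonmesh1}. I would handle this by the standard approximation step used in \cite{Gordon88,StojnicUpper10}: by homogeneity restrict to $\|\nu\|_2\le R$ (the set $Z$ and the unit sphere in $\a$ are already bounded, and the feasible $Q$'s form a product of unit spheres), replace the uncountable index set by finite $\epsilon$-nets, apply Theorem \ref{thm:Gordonmesh1} to the discretization, and use almost sure continuity of both processes in all index variables to pass $\epsilon\to 0$ and then $R\to\infty$. This is the escape-through-a-mesh style extension underlying Lemma~3.1 of \cite{Gordon88}, so once the algebraic identities above are in place, completing the proof is a mechanical exercise rather than a conceptual one.
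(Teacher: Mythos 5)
Your proposal is correct and is essentially the paper's own proof made explicit: the paper simply invokes Lemma 3.1 of \cite{Gordon88} (via the corresponding lemma of \cite{StojnicUpper10}) with $\a^TQ$ playing the role of $\a^T$, and your verification that $QQ^T=I$ (hence $\|Q^T\a\|_2=1$) together with the covariance computations is exactly what justifies that substitution. The only point worth noting is that Theorem \ref{thm:Gordonmesh1} as printed lists equalities where Gordon's comparison theorem actually requires the one-sided covariance inequalities you use; your argument correctly works with the intended inequality form.
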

\begin{proof}
The proof is exactly the same as the one of the corresponding lemma from \cite{StojnicUpper10} (or for that matter as the one of Lemma 3.1 in \cite{Gordon88}). The only difference is that in current context $\a^TQ$ plays the role that $\a^T$ played in the corresponding lemma in \cite{StojnicUpper10}.
\end{proof}
Let $\zeta_{\a,\z^{(2)},\nu}=\epsilon_{5}^{(g)}\sqrt{dn}\|\nu\|_2+\a^T\z^{(2)}$ with $\epsilon_{5}^{(g)}>0$ being an arbitrarily small constant independent of $n$. Then the left-hand side of the inequality in (\ref{eq:problemma}) is then the following probability of interest
\begin{equation*}
P(\min_{\z^{(2)}\in Z,\nu\in R^{dm}\setminus 0}\max_{\|\a\|_2=1,Q}(\|\nu\|_2\a^TQ\g+\sum_{i=1}^{dm}\h_i\nu_i-\epsilon_{5}^{(g)}\sqrt{dn}\|\nu\|_2-\a^T\z^{(2)})\geq 0).
\end{equation*}
After solving the inner maximization over $\a$ and $Q$ and pulling out $\|\nu\|_2$ one has
\begin{equation*}
P(\min_{\z^{(2)}\in Z,\nu\in R^{dm}\setminus 0}(\|\bar{\g}-\frac{1}{\|\nu\|_2}\z^{(2)}\|_2+\sum_{i=1}^{dm}\h_i\frac{\nu_i}{\|\nu\|_2}-\epsilon_{5}^{(g)}\sqrt{dn})\geq 0),
\end{equation*}
 where $\bar{\g}=[\g_{(1)},\g_{(2)},\dots,\g_{(n-k)},\g_{d(n-k)+1},\g_{d(n-k)+2},\dots,\g_{dn}]^T$, where $[\g_{(1)},\g_{(2)},\dots,\g_{(n-k)}]$ are magnitudes of vectors  $[\g_{1:d},\g_{d+1:2d},\dots,\g_{d(n-k)+1:d(n-k)}]$ sorted in increasing order. Minimization of the second term then gives us
\begin{equation}
P(\min_{\z^{(2)}\in Z,\nu\in R^{dm}\setminus 0}(\|\bar{\g}-\frac{1}{\|\nu\|_2}\z^{(2)}\|_2)\geq \|\h\|_2+\epsilon_{5}^{(g)}\sqrt{dn}).\label{eq:probanal1}
\end{equation}
Since $\h$ is a vector of $dm$ i.i.d. standard normal variables it is rather trivial that $P(\|\h\|_2<(1+\epsilon_{1}^{(m)})\sqrt{dm})\geq 1-e^{-\epsilon_{2}^{(m)} dm}$ where $\epsilon_{1}^{(m)}>0$ is an arbitrarily small constant and $\epsilon_{2}^{(m)}$ is a constant dependent on $\epsilon_{1}^{(m)}$ but independent of $n$. Then from (\ref{eq:probanal1}) one obtains
\begin{multline}
P(\min_{\z^{(2)}\in Z,\nu\in R^{dm}\setminus 0}(\|\bar{\g}-\frac{1}{\|\nu\|_2}\z^{(2)}\|_2)\geq \|\h\|_2+\epsilon_{5}^{(g)}\sqrt{dn})\\\geq (1-e^{-\epsilon_{2}^{(m)} dm})
P(\min_{\z^{(2)}\in Z,\nu\in R^{dm}\setminus 0}(\|\bar{\g}-\frac{1}{\|\nu\|_2}\z^{(2)}\|_2)\geq (1+\epsilon_{1}^{(m)})\sqrt{dm}+\epsilon_{5}^{(g)}\sqrt{dn})).\label{eq:probanal2}
\end{multline}
To make results as parallel as possible to the ones created in \cite{StojnicCSetamBlock09} we will now set $\G_i^*=\g_{d(n-k+i-1)+2:d(n-k+i)},\\1\leq i\leq k$ and
\begin{equation}
\bar{\G}=[\g_{(1)},\g_{(2)},\dots,\g_{(n-k)},\g_{d(n-k)+1},\g_{d(n-k+1)+1},\dots,\g_{d(n-1)+1},\|\G_1^*\|_2,\|\G_2^*\|_2,\dots,\|\G_k^*\|_2,]^T.\label{eq:defG}
\end{equation}
Moreover, we will set
\begin{equation}
Z_{\G}=\{\z^{(2)}|0\leq \z^{(2)}_i\leq 1,1\leq i\leq n-k,\z^{(2)}_i=1,n-k+1\leq i\leq n, \z^{(2)}_i=0,n+1\leq i\leq n+k\}.
\end{equation}
One can then rewrite (\ref{eq:probanal2}) in the following way
\begin{multline}
P(\min_{\z^{(2)}\in Z_{\G},\nu\in R^{dm}\setminus 0}(\|\bar{\G}-\frac{1}{\|\nu\|_2}\z^{(2)}\|_2)\geq \|\h\|_2+\epsilon_{5}^{(g)}\sqrt{dn})\\\geq (1-e^{-\epsilon_{2}^{(m)} dm})
P(\min_{\z^{(2)}\in Z_{\G},\nu\in R^{dm}\setminus 0}(\|\bar{\G}-\frac{1}{\|\nu\|_2}\z^{(2)}\|_2)\geq (1+\epsilon_{1}^{(m)})\sqrt{dm}+\epsilon_{5}^{(g)}\sqrt{dn})).\label{eq:probanal21}
\end{multline}

The optimization on the right-hand side of (\ref{eq:probanal2}) is structurally the same as the one in equation $(16)$ in \cite{StojnicCSetamBlock09} (actually to be more precise it is the same as the weak threshold equivalent to $(16)$). Essentially, the exact equivalence between these optimizations is achieved after in $(16)$ from \cite{StojnicCSetamBlock09}   $\tilde{\H}$ is replaced by $\bar{\G}$, $\nu$ is replaced by $\frac{1}{\|\nu\|_2}$, $\lambda$ is restricted to the lower $(n-k)$ components, and after one additionally notes that in $(16)$ from \cite{StojnicCSetamBlock09} $0\leq\lambda\leq \nu$, which corresponds to $0\leq \z^{(2)}_i\leq 1, 1\leq i\leq n-k$ introduced above (that way one would in essence obtain the weak threshold equivalent to $(16)$; this was not explicitly written anywhere in \cite{StojnicCSetamBlock09} but is rather obvious; in \cite{StojnicCSetamBlock09} we, instead, made a ``weak" equivalence to its $(30)$). With these replacements one can then use the machinery of \cite{StojnicCSetamBlock09} to establish
\begin{equation}
\min_{\z^{(2)}\in Z_{\G},\nu\in R^{dm}\setminus 0}(\|\bar{\G}-\frac{1}{\|\nu\|_2}\z^{(2)}\|_2)=\sqrt{\sum_{i=\cweak+1}^{n+k}\bar{\G}_i^2-\frac{((\bar{\G}
^T\z^{(2)})-\sum_{i=1}^{\cweak}\bar{\G}_i)^2}{n-\cweak
}}=\sqrt{f_{\G}(\cweak)}\label{eq:probanal4}
\end{equation}
where $\cweak$ is the solution of
\begin{equation}
\frac{(\bar{\G}
^T\z^{(2)})-\sum_{i=1}^{\cweak}\bar{\G}_i}{n-\cweak
}=\bar{\G}_{\cweak}.\label{eq:defcw}
\end{equation}
As a side remark, we should point out that the key point to the success of our method is that the derivation of \cite{StojnicCSetamBlock09}
establishes the equality in (\ref{eq:probanal4}). It is just that in \cite{StojnicCSetamBlock09} only the ``smaller than" inequality part of this equality was utilized. At this point we have established the core of our upper-bounding arguments. The rest is just a slightly modified repetition of the derivations from \cite{StojnicCSetamBlock09} (or one may think of them as a block parallelization of the derivations presented in \cite{StojnicUpper10}) so that we can make everything precise.

First we will define two quantities $c_{w}^{(l)}$ and $c_{w}^{(u)}$ as the solutions of the following two equations:
\begin{eqnarray}
\frac{(1-\epsilon_1^{(c)})E((\bar{\G}^T\z^{(2)})-\sum_{i=1}^{\cweak^{(l)}} \bar{\G}_i)}{n-\cweak^{(l)}}-F_a^{-1}\left (\frac{(1+\epsilon_1^{(c)}\cweak^{(l)}}{n(1-\beta_w)}\right ) & = & 0\nonumber \\
\frac{(1+\epsilon_2^{(c)})E((\bar{\G}^T\z^{(2)})-\sum_{i=1}^{\cweak^{(u)}} \bar{\G}_i)}{n-\cweak^{(u)}}-F_a^{-1}\left (\frac{(1-\epsilon_2^{(c)}\cweak^{(u)}}{n(1-\beta_w)}\right ) & = & 0.\label{eq:probanal44}
\end{eqnarray}
where $F_a^{-1}(\cdot)$ is the inverse cdf of the chi random variable with $d$ degrees of freedom ($\chi_d$), and $\epsilon_{i}^{(c)}>0,1\leq i\leq 2$ are arbitrarily small constants independent of $n$. It follows then directly from the derivation $(33)-(44)$ in \cite{StojnicCSetamBlock09} that
\begin{equation}
P(c_w\in\{c_w^{(l)},c_w^{(u)}\})\geq 1-e^{-\epsilon_{3}^{(c)}n}\label{eq:probcw}
\end{equation}
where $\epsilon_{3}^{(c)}$ is a constant dependent on $\epsilon_{i}^{(c)}>0,1\leq i\leq 2$, $\cweak^{(l)}$, $\cweak^{(u)}$ but independent of $n$.
We now set $c_w=c_w^{(u)}$ and focus on (\ref{eq:probanal4}). Concentration analysis machinery of \cite{StojnicCSetamBlock09} will help us establish a ``high probability" lower bound on $f_{\G}(c_w)$ (this will amount to nothing but reversing the concentration arguments that we have established in \cite{StojnicCSetamBlock09}; concentration arguments are of course easy to reverse; what was harder to reverse was the part before (\ref{eq:probanal4})). We now split $f_{\G}(c_w)$ into two parts i.e.
\begin{equation}
f_{\G}(c_w)=f_{\G}^{(1)}(\cweak)-f_{\G}^{(2)}(\cweak),\label{eq:deff}
\end{equation}
where $f_{\G}^{(1)}(\cweak)=\sum_{i=\cweak+1}^n\bar{\G}_i^2$ and $f_{\G}^{(2)}(\cweak)=(\bar{\G}
^T\z^{(2)})-\sum_{i=1}^{\cweak}\bar{\G}_i)$. Now, $f_{\G}^{(1)}(\cweak)$ concentrates trivially, the argument is the same as the one that can be established when $\cweak=0$ (alternatively one can repeat derivation $(42)$ from \cite{StojnicCSetam09} to obtain the Lipschitz constant and combine it with Lipschitz concentration formula $(36)$ also in \cite{StojnicCSetamBlock09}). So we have
\begin{equation}
P(f_{\G}^{(1)}(\cweak)\geq (1-\epsilon_{1}^{(g)})Ef_{\G}^{(1)}(\cweak))>1-e^{-\epsilon_{2}^{(g)}n},\label{eq:concf1}
\end{equation}
again as usual $\epsilon_{1}^{(g)}>0$ is an arbitrarily small constant and $\epsilon_{2}^{(g)}$ is a constant dependent on $\epsilon_{1}^{(g)}$ and $\cweak$ but independent of $n$. On the other hand, concentration of $f_{\G}^{(2)}(\cweak)$ follows by reversing $(43)$ from \cite{StojnicCSetamBlock09}, i.e.
\begin{equation}
P(f_{\G}^{(2)}(\cweak)\geq (1+\epsilon_{3}^{(g)})Ef_{\G}^{(2)}(\cweak))>1-e^{-\epsilon_{4}^{(g)}n}\label{eq:concf2}
\end{equation}
where again as usual $\epsilon_{3}^{(g)}>0$ is an arbitrarily small constant and $\epsilon_{4}^{(g)}$ is a constant dependent on $\epsilon_{3}^{(g)}$ and $\cweak$ but independent of $n$. Combination of (\ref{eq:probanal4}), (\ref{eq:concf1}) and (\ref{eq:concf1}) gives (the only other thing one should observe here is that $E((\bar{\G}
^T\z^{(2)})-\sum_{i=1}^{\cweak}\bar{\G}_i)\geq 0$)
\begin{multline}
\hspace{-.4in}P\left (\sqrt{\sum_{i=\cweak+1}^{n+k}\bar{\G}_i^2-\frac{((\bar{\G}
^T\z^{(2)})-\sum_{i=1}^{\cweak}\bar{\G}_i)^2}{n-\cweak
}}\geq \sqrt{(1-\epsilon_{1}^{(g)})E\sum_{i=\cweak+1}^{n+k}\bar{\G}_i^2-\frac{(1+\epsilon_{3}^{(g)})^2(E((\bar{\G}
^T\z^{(2)})-\sum_{i=1}^{\cweak}\bar{\G}_i))^2}{n-\cweak
}}\right )\\\geq (1-e^{-\epsilon_{2}^{(g)}n})(1-e^{-\epsilon_{4}^{(g)}n}).\label{eq:probanal5}
\end{multline}
%
Now, let
\begin{equation}
dm_{w}=\frac{1}{(1+\epsilon_{1}^{(m)})^2}(\sqrt{(1-\epsilon_{1}^{(g)})E\sum_{i=\cweak+1}^{n+k}\bar{\G}_i^2-\frac{(1+\epsilon_{3}^{(g)})^2(E((\bar{\G}
^T\z^{(2)})-\sum_{i=1}^{\cweak}\bar{\G}_i))^2}{n-\cweak
}}-\epsilon_{3}^{(m)}\sqrt{dn}-\epsilon\sqrt{dn})^2,\label{eq:defmmax}
\end{equation}
where $\epsilon_{3}^{(m)}>0$ is an arbitrarily small constant. Combining (\ref{eq:probanal2}), (\ref{eq:probanal4}), (\ref{eq:probcw}), (\ref{eq:probanal5}), and  (\ref{eq:defmmax}) we have
\begin{equation}
P(\min_{\z^{(2)}\in Z_{\G},\nu\in R^{dm}\setminus 0}(\|\bar{\G}-\frac{1}{\|\nu\|_2}\z^{(2)}\|_2)\geq \|\h\|_2+\epsilon_{5}^{(g)}\sqrt{dn})\\ \geq (1-e^{-\epsilon_{2}^{(m)}m})
(1-e^{-\epsilon_{2}^{(g)} n})(1-e^{-\epsilon_{4}^{(g)} n})(1-e^{-\epsilon_{3}^{(c)}n}).\label{eq:probanal6}
\end{equation}
Further combination of (\ref{eq:problemma}), (\ref{eq:probanal1}), (\ref{eq:probanal2}), and (\ref{eq:probanal6}) gives us that if $m=m_{w}$
\begin{equation}
\hspace{-.7in}P(\min_{\z^{(2)}\in Z_{\G},\nu\in R^{dm}\setminus 0}\max_{\|\a\|_2=1,Q}(\a^T Q A^T\nu-\a^T\z^{(2)}+\|\nu\|_2(g-\epsilon_{5}^{(g)}\sqrt{dn}))\geq 0)\geq (1-e^{-\epsilon_{2}^{(m)}dm_w})
(1-e^{-\epsilon_{2}^{(g)} n})(1-e^{-\epsilon_{4}^{(g)} n})(1-e^{-\epsilon_{3}^{(c)}n}).\label{eq:probanal7}
\end{equation}
Since $P(g\leq\epsilon_{5}^{(g)}\sqrt{dn})\geq 1-e^{-\epsilon_{6}^{(g)} dn}$ (where $\epsilon_{6}^{(g)}$ is, as all other $\epsilon$'s in this paper are, independent of $n$) from (\ref{eq:probanal7}) we finally have
\begin{equation}
\hspace{-.5in}P(\min_{\z^{(2)}\in Z_{\G},\nu\in R^{dm}\setminus 0}\max_{\|\a\|_2=1,Q}(-\a^T QA^T\nu+\a^T\z^{(2)})> 0)\geq (1-e^{-\epsilon_{2}^{(m)}dm_w})
(1-e^{-\epsilon_{2}^{(g)} n})(1-e^{-\epsilon_{4}^{(g)} n})(1-e^{-\epsilon_{6}^{(g)} dn})(1-e^{-\epsilon_{3}^{(c)}n}).\label{eq:probanal8}
\end{equation}
Connecting (\ref{eq:finaldet}) and (\ref{eq:probanal8}) we obtain
\begin{equation*}
P(-\tau(A)> 0)\geq (1-e^{-\epsilon_{2}^{(m)}dm_w})
(1-e^{-\epsilon_{2}^{(g)} n})(1-e^{-\epsilon_{4}^{(g)} n})(1-e^{-\epsilon_{6}^{(g)} dn})(1-e^{-\epsilon_{3}^{(c)}n}),
\end{equation*}
and ultimately
\begin{equation}
\lim_{n\rightarrow\infty}P(\tau(A)< 0)=\lim_{n\rightarrow\infty} (1-e^{-\epsilon_{2}^{(m)}dm_w})
(1-e^{-\epsilon_{2}^{(g)} n})(1-e^{-\epsilon_{4}^{(g)} n})(1-e^{-\epsilon_{6}^{(g)} dn})(1-e^{-\epsilon_{3}^{(c)}n})=1\label{eq:finaltau}
\end{equation}
which is what we established as a goal in (\ref{eq:mcrit}).
We summarize the results in the following theorem.

\begin{theorem}(Exact weak threshold)
Let $A$ be a $dm\times dn$ measurement matrix in (\ref{eq:system})
with the null-space uniformly distributed in the Grassmanian. Let
the unknown $\x$ in (\ref{eq:system}) be $k$-block-sparse with the length of its blocks $d$. Further, let the location and the directions of nonzero blocks of $\x$ be arbitrarily chosen but fixed.
Let $k,m,n$ be large
and let $\alpha=\frac{m}{n}$ and $\betaweak=\frac{k}{n}$ be constants
independent of $m$ and $n$. Let $\gammainc(\cdot,\cdot)$ and $\gammainc^{-1}(\cdot,\cdot)$ be the incomplete gamma function and its inverse, respectively. Further,
let all $\epsilon$'s below be arbitrarily small constants.

\begin{enumerate}
\item Let $\hthetaweak$, ($\betaweak\leq \hthetaweak\leq 1$) be the solution of
\begin{equation}
(1-\epsilon_1^{(c)})(1-\betaweak)\frac{\frac{\sqrt{2}\Gamma(\frac{d+1}{2})}{\Gamma(\frac{d}{2})}
\left (1-\gammainc(\gammainc^{-1}(\frac{1-\thetaweak}{1-\betaweak},\frac{d}{2}),\frac{d+1}{2})\right )}{\thetaweak}-\sqrt{2\gammainc^{-1}(\frac{(1+\epsilon_1^{(c)})(1-\thetaweak)}{1-\betaweak},\frac{d}{2})}=0
.\label{eq:thmweaktheta}
\end{equation}
If $\alpha$ and $\betaweak$ further satisfy
\begin{multline}
\alpha d>(1-\betaweak)\frac{2\Gamma(\frac{d+2}{2})}{\Gamma(\frac{d}{2})}
\left (1-\gammainc(\gammainc^{-1}(\frac{1-\hthetaweak}{1-\betaweak},\frac{d}{2}),\frac{d+2}{2})\right )
+\betaweak d\\-\frac{\left ((1-\betaweak)\frac{\sqrt{2}\Gamma(\frac{d+1}{2})}{\Gamma(\frac{d}{2})}
(1-\gammainc(\gammainc^{-1}(\frac{1-\hthetaweak}{1-\betaweak},\frac{d}{2}),\frac{d+1}{2}))\right ) ^2}{\hthetaweak}\label{eq:thmweakalpha}
\end{multline}
then with overwhelming probability the solution of (\ref{eq:l2l1}) is the $k$-block-sparse $\x$ from (\ref{eq:system}).

\item Let $\htheta_w$, ($\beta_w\leq \htheta_w\leq 1$) be the solution of
\begin{equation}
(1+\epsilon_2^{(c)})(1-\betaweak)\frac{\frac{\sqrt{2}\Gamma(\frac{d+1}{2})}{\Gamma(\frac{d}{2})}
\left (1-\gammainc(\gammainc^{-1}(\frac{1-\thetaweak}{1-\betaweak},\frac{d}{2}),\frac{d+1}{2})\right )}{\thetaweak}-\sqrt{2\gammainc^{-1}(\frac{(1-\epsilon_2^{(c)})(1-\thetaweak)}{1-\betaweak},\frac{d}{2})}=0
.\label{eq:thmweaktheta1}
\end{equation}
If $\alpha$ and $\betaweak$ further satisfy
\begin{multline}
\alpha d<\frac{1}{(1+\epsilon_1^{(m)})^2}((1-\epsilon_1^{(g)}) (1-\betaweak)\frac{2\Gamma(\frac{d+2}{2})}{\Gamma(\frac{d}{2})}
\left (1-\gammainc(\gammainc^{-1}(\frac{1-\hthetaweak}{1-\betaweak},\frac{d}{2}),\frac{d+2}{2})\right )
+\betaweak d\\-\frac{\left ((1-\betaweak)\frac{\sqrt{2}\Gamma(\frac{d+1}{2})}{\Gamma(\frac{d}{2})}
(1-\gammainc(\gammainc^{-1}(\frac{1-\hthetaweak}{1-\betaweak},\frac{d}{2}),\frac{d+1}{2}))\right ) ^2}{\hthetaweak (1+\epsilon_3^{(g)})^{-2}} )\label{eq:thmweakalpha1}
\end{multline}
then with overwhelming probability there will be a $k$-block-sparse $\x$ (from a set of $\x$'s with fixed locations and directions of nonzero blocks) that satisfies (\ref{eq:system}) and is \textbf{not} the solution of (\ref{eq:l2l1}).
\end{enumerate}
\label{thm:thmweakthr}
\end{theorem}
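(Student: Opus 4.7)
My plan is to prove the two parts separately, then combine them as the statement just summarizes what has been established in the two halves of the paper.

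For part 1, I would cite \cite{StojnicCSetamBlock09} essentially verbatim. The null-space condition (\ref{eq:thmeqgenweak1}) of Theorem \ref{thm:thmgenweak} reduces the problem to showing that with overwhelming probability, for every $\w$ in the null space of $A$, the inequality holds. Fixing the directions of the nonzero blocks and using the Gaussianity of $A$, the analysis of $(16)$--$(44)$ in \cite{StojnicCSetamBlock09} produces exactly the relation (\ref{eq:thmweaktheta})--(\ref{eq:thmweakalpha}) as the sufficient threshold via Gordon's escape-through-the-mesh machinery together with the concentration around $\cweak^{(l)}$. No new work is required here beyond invoking the first half of Theorem \ref{thm:thmgenweak} in place of the older characterization used in \cite{StojnicCSetamBlock09}.

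For part 2 (the novel upper bound), my target is to show $\lim_{n\to\infty} P(\tau(A)<0)=1$ whenever (\ref{eq:thmweakalpha1}) holds, because by (\ref{eq:mcrit}) and the second half of Theorem \ref{thm:thmgenweak} this produces a feasible $k$-block-sparse $\x$ that $\ell_2/\ell_1$ fails to recover. First I would introduce the optimization (\ref{eq:deftau}) defining $\tau(A)$, take the Lagrangian with multipliers $\nu$ (for $A\w=0$) and $\gamma$ (for $\|\w\|_2\le 1$), and explicitly minimize over each block $\W_i$ using (\ref{eq:deftau8})--(\ref{eq:deftau12}); this yields the closed form (\ref{eq:finaltau}), which I would then rewrite in the bilinear min-max form (\ref{eq:finaldet}) by introducing the slack variables $\z^{(2)}\in Z$, the unit vectors $\q_i$, and the aggregated matrix $Q$.

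Next I would invoke Lemma \ref{eq:unsignedlemma} (the Gordon comparison stated in the paper) with $\zeta_{\a,\z^{(2)},\nu} = \epsilon_5^{(g)}\sqrt{dn}\,\|\nu\|_2 + \a^T\z^{(2)}$. The key observation is that the role of the ``linear functional on the sphere'' in the original Gordon lemma is now played by $\a^T Q$, which is permissible because $\|\a^T Q\|_2 = 1$ when $\|\a\|_2=1$ and $\|\q_i\|_2=1$. Solving the inner maximization over $\a$ and $Q$ in the right-hand side of (\ref{eq:problemma}) reduces it to minimizing $\|\bar{\G}-\tfrac{1}{\|\nu\|_2}\z^{(2)}\|_2$ over $\z^{(2)}\in Z_{\G}$ and $\nu$, as in (\ref{eq:probanal1})--(\ref{eq:probanal21}), after using the trivial concentration $\|\h\|_2 \approx \sqrt{dm}$.

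The heart of the argument is then the equality (\ref{eq:probanal4}): the minimum equals $\sqrt{f_{\G}(\cweak)}$ with $\cweak$ determined by (\ref{eq:defcw}). This is exactly the computation from \cite{StojnicCSetamBlock09}, but I would stress (as the paper does) that one needs the full equality rather than just the upper bound used there; establishing this equality carefully is the main obstacle, since it requires checking that the optimal $\z^{(2)}$ and $\nu$ indeed realize the closed form. Once that equality is in hand, I would apply the two concentration inequalities (\ref{eq:concf1}) for $f_{\G}^{(1)}(\cweak)$ and (\ref{eq:concf2}) for $f_{\G}^{(2)}(\cweak)$ (reversed in direction from \cite{StojnicCSetamBlock09}), together with the ``$\cweak \in \{\cweak^{(l)},\cweak^{(u)}\}$'' concentration (\ref{eq:probcw}), to obtain a lower bound on the minimum in terms of expectations. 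Choosing $m = m_w$ as in (\ref{eq:defmmax}) and computing $E\sum_{i=\cweak+1}^{n+k}\bar{\G}_i^2$ and $E((\bar{\G}^T\z^{(2)})-\sum_{i=1}^{\cweak}\bar{\G}_i)$ in closed form using the $\chi_d$ density (yielding the $\gamma_{inc}$ expressions in (\ref{eq:thmweakalpha1})) gives the stated threshold. Finally, combining everything through (\ref{eq:probanal6})--(\ref{eq:probanal8}) and the standard tail bound $P(g\le \epsilon_5^{(g)}\sqrt{dn})\ge 1-e^{-\epsilon_6^{(g)}dn}$ for a single Gaussian $g$ yields $P(\tau(A)<0)\to 1$, completing the argument.
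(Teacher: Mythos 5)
Your proposal is correct and follows essentially the same route as the paper: part 1 is delegated to \cite{StojnicCSetamBlock09}, and part 2 proceeds exactly through the chain the paper uses --- the dual/closed-form expression for $\tau(A)$, the min--max rewriting with $Q$ and $\z^{(2)}$, the Gordon comparison lemma with $\a^TQ$ in place of $\a^T$, the equality (rather than one-sided bound) in (\ref{eq:probanal4}), the reversed concentration estimates, and the choice of $m_w$ in (\ref{eq:defmmax}). You even flag the same two points the paper singles out as essential (needing the full equality in (\ref{eq:probanal4}) and reversing the concentration directions), so no further comparison is needed.
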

\begin{proof}
The first part was established in \cite{StojnicCSetamBlock09}. The second part follows from the previous discussion combining (\ref{eq:thmeqgenweak2}), (\ref{eq:deftau}), (\ref{eq:mcrit}), (\ref{eq:probanal44}), (\ref{eq:defmmax}),  and (\ref{eq:finaltau}).
\end{proof}

%
%
%
%
%

The above theorem establishes the fundamental characterization of the $\ell_2/\ell_1$ performance. Numerical values of the weak threshold obtained using (\ref{eq:thmweaktheta}) and (\ref{eq:thmweakalpha}) were presented in \cite{StojnicCSetamBlock09}. As it was demonstrated there, the lower bounds on the thresholds were in an excellent numerical agreement with the recovery thresholds that can be obtained through numerical simulations. Theorem  \ref{thm:thmweakthr} establishes that the lower bounds computed in \cite{StojnicCSetamBlock09} (essentially those one can compute from (\ref{eq:thmweaktheta}) and (\ref{eq:thmweakalpha})) are actually the upper bounds as well and as such are the exact values of the weak thresholds.


For the completeness we present in Figure \ref{fig:weakthr} again the plot obtained based on the ultimate characterization (\ref{eq:thmweaktheta}), (\ref{eq:thmweakalpha}).
\begin{figure}[htb]
\centering
\centerline{\epsfig{figure=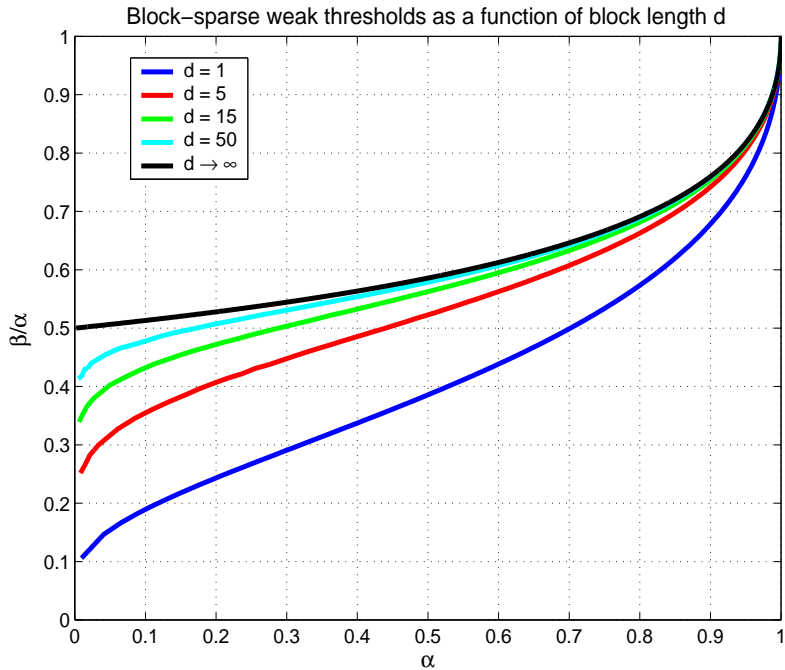,width=10.5cm,height=9cm}}
\vspace{-0.2in} \caption{\emph{Weak} threshold, $\ell_2/\ell_1$-optimization --- ultimate performance}
\label{fig:weakthr}
\end{figure}

\section{Discussion}
\label{sec:discuss}

In this paper we considered under-determined linear systems of equations with sparse solutions. In our recent papers we created mechanisms that can be used to analyze almost to perfection the performance of a technique called $\ell_1$-optimization when used for solving such systems. When presenting those results we have mentioned that various generalizations are possible. In this paper we presented a set of such generalizations. The results that we presented here relate to a specific type of sparse vectors, namely the so-called block-sparse vectors.

We looked from a theoretical point of view at a classical polynomial-time
$\ell_2/\ell_1$-optimization algorithm that can be used for recovery of such vectors. Such an optimization algorithm is a natural generalization of the above mentioned $\ell_1$-optimization that is typically employed when the unknown vectors besides being sparse are not known to possess any other type of structure. Under the assumption that the system matrix $A$ has i.i.d. standard normal components,
we derived upper bounds on the values of the recoverable weak
thresholds in the so-called linear regime, i.e. in the regime when
the recoverable sparsity is proportional to the length of the
unknown vector. Obtained upper bounds match the corresponding lower bounds we found through a framework designed in \cite{StojnicCSetamBlock09}. A combination of the mechanism from \cite{StojnicCSetamBlock09} and the one that we presented in this paper is then enough to provide an explicit ultimate characterization of the success of $\ell_2/\ell_1$-optimization when applied in solving under-determined systems of linear equations with block-sparse solutions.

As mentioned in a companion paper \cite{StojnicUpper10}, further developments are then pretty much unlimited. Various specific problems that have been of interest in a broad scientific literature developed over the last few years can then easily be handled. Examples, include (but of course are not limited to) problems like quantifying the performance of $\ell_2/\ell_1$ (or even $\ell_1$) type of optimization problems in solving systems which on top of having block-sparse solutions also possess other types of structured solution vectors (binary, box-constrained, partially known locations of nonzero blocks, just to name a few), systems with non-exact (noisy) solution vectors and/or equations. In a few forthcoming companion papers we will present some of these applications. However, as it will be clear when these results appear, each of them will require some work to put the mechanism forth but in essence they all will be fairly simple extensions of what we presented in \cite{StojnicUpper10,StojnicCSetamBlock09} and here. The heart of it all will really be the lower-bounding mechanism designed in \cite{StojnicCSetam09,StojnicCSetamBlock09} and the complementary upper-bounding mechanism designed in \cite{StojnicUpper10} and in this paper and how the two ultimately meet in a nice way.

\begin{singlespace}
\bibliographystyle{plain}
\bibliography{OptBlDep}
\end{singlespace}

\end{document}